\newcommand{\norme}[1]{\lVert #1 \rVert}
\newcommand{\tx}{\Tilde{x}}
\newcommand{\ep}{\epsilon}
\newcommand{\ka}{\kappa}
\newcommand{\norm}[1]{\Vert {#1} \Vert}
\newcommand{\bigo}{\mathcal{O}}
\newtheorem{theorem}{Theorem}[section]
\newtheorem{remark}{Remark}
\begin{document}

\title{A mixed-precision quantum-classical algorithm for solving linear systems}

\author{\IEEEauthorblockN{1\textsuperscript{st} Océane Koska}
\IEEEauthorblockA{\textit{Université Paris-Saclay and Eviden} \\
Orsay, France \\
oceane.koska@eviden.com }
\and
\IEEEauthorblockN{2\textsuperscript{nd} Marc Baboulin}
\IEEEauthorblockA{\textit{Université Paris-Saclay and Inria}\\
Orsay, France \\
marc.baboulin@inria.fr}
\and
\IEEEauthorblockN{3\textsuperscript{rd} Arnaud Gazda}
\IEEEauthorblockA{\textit{Eviden Quantum Lab} \\
Les Clayes-sous-Bois, France\\
arnaud.gazda@eviden.com}
}

\maketitle

\begin{abstract}
We address the problem of solving a system of linear equations via the Quantum Singular Value Transformation (QSVT). One drawback of the QSVT algorithm is that it requires huge quantum resources if we want to achieve an acceptable accuracy. To reduce the quantum cost, we propose a hybrid quantum-classical algorithm that improves the accuracy and reduces the cost of the QSVT by adding iterative refinement in mixed-precision A first quantum solution is computed using the QSVT, in low precision, and then refined in higher precision until we get a satisfactory accuracy.
For this solver, we present an error and complexity analysis, and first experiments using the quantum software stack myQLM.
\end{abstract}

\begin{IEEEkeywords}
Quantum computing, Linear systems, Quantum Singular Value Transformation, Mixed-precision algorithms, Iterative refinement.
\end{IEEEkeywords}

\section{Introduction}
In this paper we address the linear system (LS) problem where, given a nonsingular matrix $A \in \mathbb{R}^{N \times N}$ and a vector $b \in \mathbb{R}^N$, we want to compute $x \in \mathbb{R}^N$ such that
\begin{equation}\label{eq:LS}
Ax = b.    
\end{equation}
Solving the LS problem accurately and efficiently is a fundamental problem in computational science for which there exist many algorithms and software libraries on classical processors~\cite{LAPACK, GVL13}. Numerical methods based on factorization or iterative algorithms have reached a high level of maturity on classical computers in particular with the impressive performance of 1 Exaflop/s achieved for solving a dense linear system by Gaussian elimination (Linpack benchmark~\cite{LINPACK}). The perspective of having operational quantum computers in a near future has motivated the development of quantum algorithms for the LS problem, with the promise of either obtaining faster solution or solving problems that are currently intractable with classical supercomputers. With algorithms such as Harrow-Hassidim-Lloyd (HHL) \cite{HHL09}, the Quantum Singular Value Transformation (QSVT) \cite{GSLW19},  or the Variational Quantum Linear Solver (VQLS) \cite{VQLS}, quantum algorithms for linear systems offer the potential for exponential speedups under specific conditions.

However, these quantum algorithms currently have limitations in terms of matrix conditioning and solution accuracy. These can be mitigated by using a combination of both classical computations (running on a Central Processing Unit - CPU) and quantum computations (running on a Quantum Processing Unit - QPU). Recent studies have proposed architectural designs for the integration of quantum devices into High-Performance Computing (HPC) systems, to improve the computation accuracy and to reduce the latency \cite{BSH20, HMLG21}. This type of architectures will be beneficial for implementations that require data transfer between CPU and QPU. At the algorithm level, the techniques that could be targeted to improve linear system solvers could for instance include preconditioning methods or iterative refinement.

Iterative refinement has been recently studied for the HHL algorithm \cite{IR_HHL_saito2021, IR_HHL_zheng2024} or for optimization problems \cite{IR_opti}.
In this paper we consider the LS problem solved via the QSVT method which offers several advantages: the matrix does not need to be Hermitian and even not square (in this case we solve a least squares problem), the QSVT method can exploit efficient block-encoding techniques and appears to be well suited for Large Scale Quantum (LSQ) architectures. Note that our work is carried out in an LSQ context and not NISQ (Noisy Intermediate-Scale Quantum) due to the excessive depth of quantum circuits for the QSVT algorithm. The quantum circuit for the QSVT requires a polynomial approximation that is very expensive if we want to achieve acceptable accuracy (typically better than $10^{-5}$ \cite{NJ24}). This has motivated our choice for developing a mixed-precision algorithm which combines the speed and limited precision of the QSVT with iterative refinement in higher precision. 
Mixed-precision techniques are widely used in HPC algorithms \cite{Mixed_2016}. They have been initially motivated by the emergence of GPU accelerators that provide high performance when computing in  lower precision arithmetic, for instance using tensor core units \cite{Baboulin_Europar}. For hybrid CPU/GPU architectures we can achieve high performance of the solver while preserving the accuracy of the higher precision and we investigate how similar techniques can be applied with the QSVT method. Similarly to the CPU/GPU case, our iterative refinement is also hybrid since it uses a classical processor (to compute the residual and solution update) and a quantum processor (for the QSVT solver). We adapt the iterative refinement to the case where the LS problem is solved via the QSVT and we present an error and complexity analysis to evaluate the quantum ans classical costs of our algorithm. We also show that our approach provides a significant advantage in complexity compared to directly solving the LS problem with QSVT in higher precision.

The paper is organized as follows: in Section \ref{sec:background}, we recall some results on  QSVT and classical mixed-precision iterative refinement applied to LS problems. Then in Section \ref{sec:iterref}, we describe a hybrid algorithm for LS problems using iterative refinement based on the QSVT. We also provide a convergence and complexity analysis. In Section \ref{sec:exp}, we present experimental results on random matrices with various condition numbers. Finally, concluding remarks are given in Section \ref{sec:conclusion}.

Notation: Unless otherwise stated, $\norm{.}$ denotes the Euclidean norm for vectors and the spectral norm for matrices. 


\section{Background} \label{sec:background}

\subsection{Quantum Singular Value Transformation (QSVT)}
\subsubsection{Block-encoding of matrices}\label{sec:block_encoding}
Since quantum algorithms can only handle unitary matrices,
the block-encoding technique consists in embedding a non-unitary matrix into a unitary one \cite{CLVY23, CV22}. Namely a general matrix $A \in \mathbb{C}^{N\times N}$ (with $N=2^n$) is encoded into a unitary matrix $U$ as 
$$
    U = 
    \begin{bmatrix}
    A & \cdot \\
    \cdot & \cdot
\end{bmatrix}.
$$
The matrix $A$ can be expressed from the unitary $U$ using two projectors $\Tilde{\Pi}$ and $\Pi$ with
$$
A = \Tilde{\Pi} U \Pi.
$$
When we apply $U$ to $\ket{0}_a\ket{\psi}_d$ , where $\ket{0}_a$ corresponds to the ancilla qubits and $\ket{\psi}_d$ corresponds to the data qubits, we get

$$
    U \left( \ket{0}_a \ket{\psi}_d \right)= \ket{0}_a A \ket{\psi}_d + \cdots.
$$

The literature provides several block-encoding methods to encode an arbitrary matrix into a unitary. The Linear Combination of Unitaries (LCU) method is a versatile approach to encode matrices~\cite{CW12}. It consists in representing the matrix $A$ as a weighted sum of unitary operators $A = \sum_j \alpha_j U_j$.
This method relies on the state preparation of the coefficients $\alpha_j$ using ancillary qubits and controlled operations. This method can be used for general matrices~\cite{KBG24}. The Fast Approximate Block Encoding (FABLE) algorithm~\cite{CV22} provides an efficient way to construct an approximation of block-encoding by eliminating the negligible terms and removing useless controls in the LCU approach. This algorithm achieves a complexity bounded by $\mathcal{O}(4^n)$ gates for general and unstructured matrices. However it is also possible to take advantage of the sparsity and the structure of some matrices to reach a complexity in $\mathcal{O}(poly(n))$ \cite{CLVY23}.

\subsubsection{QSVT definition}

The Quantum Singular Value Transformation (QSVT) is a matrix function that operates on the singular values of a matrix using a quantum computer \cite{GSLW19}. Given a matrix $A \in \mathbb{C}^{N \times N}$ with the Singular Value Decomposition (SVD)
$$
A = W \Sigma V^\dagger
$$
($\Sigma=diag(\sigma_1,\dots,\sigma_N)$ and  $W,~V$ unitary matrices), given a polynomial $P$ of degree $d$, then the QSVT of $A$ using $P$ is expressed by 
$$
    W \Sigma V^\dagger =
    \begin{cases}
        W P(\Sigma) V^\dagger \; \mathrm{if} \; d \; \mathrm{is\; odd,}\\
        V P(\Sigma) V^\dagger \; \mathrm{if} \; d \; \mathrm{is\; even.}
    \end{cases}
$$

However, the QSVT imposes some constraints on the polynomial used in the transformation \cite{GSLW19, LYC16}:

\begin{itemize}
    \label{itemize:QSP_poly}
    \item $P$ has parity-$(d \; \mathrm{mod} \; 2)$,
    \item $\forall x \in [-1,1]$, $|P(x)| \leq 1$,
    \item $\forall x \in (-\infty, -1] \cup [1, \infty)$, $|P(x)| \geq 1$,
    \item if $d$ is even, then $\forall x \in \mathbb{R}$, $P(ix)P^*(ix) \geq 1$.
\end{itemize}

\subsubsection{QSVT implementation}\label{sec:QSVT_implem}
If the condition above 
are respected, there exists a vector of phases $\Phi = \{\phi_1, \dots, \phi_d\} \in \mathbb{R}^d$ such that we can define an alternating phase modulation sequence operator $U_\Phi$ such that:\\
if $d$ is odd,
\begin{equation}
    \label{eq:apms_operator_odd}
    U_\Phi = e^{i\phi_1(2\tilde\Pi-I)}U\prod_{j=1}^{(d-1)/2}\left( e^{i\phi_{2j}(2\Pi-I)}U^\dagger e^{i\phi_{2j+1}(2\tilde\Pi-I)}U\right)
\end{equation}
and if $d$ is even,
\begin{equation}
    \label{eq:apms_operator_even}
    U_\Phi = \prod_{j=1}^{d/2}\left( e^{i\phi_{2j-1}(2\Pi-I)}U^\dagger e^{i\phi_{2j}(2\tilde\Pi-I)}U\right).
\end{equation}

This operator can be used to apply the QSVT in a quantum computer, according to the polynomial $P$, to the matrix $A$
$$
    QSVT^P(A) =
    \begin{cases}
        \tilde\Pi U_\Phi \Pi, \mathrm{if} \; d \; \mathrm{is \; odd}\\
        \Pi U_\Phi \Pi, \mathrm{if} \; d \; \mathrm{is \; even}
    \end{cases}
$$

\begin{remark}\label{rem:QSVT_complexity}
    For a problem of size $2^n$ with a polynomial of degree $d$, then it comes from~\cite{GSLW19}) that the alternating phase modulation sequence described in Equations (\ref{eq:apms_operator_odd}) and (\ref{eq:apms_operator_even}) can be efficiently implemented using
    \begin{itemize}
        \item $n$ data qubits,
        \item  a single ancilla qubit,
        \item $d$ calls to the block-encoding $U$ and $U^\dagger$,
        \item $d$ calls to the operators of the form $e^{i\phi(2\Pi - I)}$ and $e^{i\phi(2\tilde\Pi - I)}$.
    \end{itemize}
Then the circuit's depth and complexity scale logarithmically with the problem dimension and linearly with the degree of the polynomial.
\end{remark}

\subsubsection{Solving linear systems with QSVT}\label{sec:qsvt4ls}
To solve LS problems via the QSVT we need to find an odd-degree polynomial approximation of the inverse function that fits the requirements given in Section \ref{itemize:QSP_poly}. By applying this polynomial $P$ for the QSVT of the matrix $A^\dagger$ we get:
$$
    QSVT^{P}(A^\dagger) = V P(\Sigma)W^\dagger \approx V \Sigma^{-1} W^\dagger = A^{-1},
$$
where $\Sigma^{-1}=diag(\sigma_1^{-1},\dots,\sigma_N^{-1})$.

In practice we want to find an $\frac{\epsilon}{2\kappa}$-approximation to $\frac{1}{2\kappa} \frac{1}{x}$, where $\kappa$ is the condition number of the matrix to be inverted, and $\epsilon$ the error on $[-1,1] \setminus [\frac{-1}{\kappa}, \frac{1}{\kappa}]$. The inverse function is difficult to approximate with a polynomial (not continuous, infinite values in 0...). We need to find an odd function approximating the inverse function on $[-1,1] \setminus [\frac{-1}{\kappa}, \frac{1}{\kappa}]$ that would be easier to approximate using a polynomial. A function that can be used is 
$$
f_{\epsilon, \kappa}(x) = \frac{1 - (1-x^2)^b}{x},
$$
where $b(\epsilon, \kappa) = \lceil \kappa^2 log(\kappa / \epsilon)\rceil$ \cite{GSLW19}.

Then this function can be $\epsilon$-approximated by the polynomial
\begin{equation}
\label{eq:poly}
    P^{1/x}_{2\epsilon, \kappa}(x) = 4\sum_{j=0}^D (-1)^j \left[2^{-2b} \sum_{i=j+1}^b \binom{b+i}{2b}\right] T_{2j+1}(x),
\end{equation}

where $T_i(x)$ is the Chebyshev polynomial of first kind of order $i$, and $D(\epsilon, \kappa) = \lceil \sqrt{b(\epsilon, \kappa) \; log(4b(\epsilon, \kappa)/\epsilon)} \rceil$ \cite{MRTC21}. Using Chebyshev polynomials to perform the polynomial approximation (instead of expressing the approximation in the canonical basis) highly reduces the impact of Runge's phenomenon when working with high degree polynomials \cite{R20}.

However, this polynomial does not fit the conditions of the QSVT, because it is not necessarily bounded in magnitude by $1$ for $x \in \left[\frac{-1}{2\kappa}, \frac{1}{2\kappa} \right]$. To enforce the magnitude to be bounded, we need to multiply this polynomial by another one. A polynomial that approximates a rectangular function will fit these constraints \cite{MRTC21}.

One of the main challenges in solving linear systems using QSVT lies in accurately approximating the inverse function with a polynomial, without drastically increasing the polynomial's degree. This constraint impacts the achievable accuracy in linear system solving and/or limits the maximum condition number that can be addressed. Current state-of-the-art methods achieve a precision of $10^{-5}$ for condition numbers as large as $10^6$ \cite{NJ24}. A summary of error analysis for the polynomial approximation and the resulting forward error on the LS solution can be found in ~\cite[p. 125]{L22}.

\subsection{Mixed-precision iterative refinement for linear systems}\label{sec:itergeneral}

Iterative refinement~\cite{H96,W63} is a well-known technique that improves a computed solution $\tx$ to $Ax=b$ (step 0) by performing the following steps:
\begin{enumerate}
    \item compute the residual $r=b-A\tx$,
    \item solve $Ae=r$ to obtain the correction vector $e$,
    \item update $\tx \leftarrow \tx + e$ which gives a $\tx$ ``closer'' to the exact solution $x$.    \end{enumerate}
This process can then be repeated until we obtain a satisfying $\tx$. Using the same precision arithmetic throughout the process (\textit{fixed-precision} iterative refinement) is classically used to improve the accuracy of a computed solution or added to ameliorate a potentially instable solver~\cite[p. 232]{H96}.
However, with the emergence of processors that propose much faster computation using lower precision arithmetic, it became attractive to combine different precisions in order to exploit the high performance provided by some processors or computational units (e.g., tensor cores for NVIDIA GPUs), resulting in so-called \textit{mixed-precision} iterative refinement algorithms.

When solving linear systems, mixed-precision is of interest when the refinement (computed in higher precision) is cheap compared to the computation of the first $\tx$ (computed in lower precision). In mixed-precision iterative refinement for general matrices, the most expensive tasks are the initial solution of $Ax=b$ (step 0) and the successive solves of $Ae=r$ (step 2) which are then achieved in lower precision. On the other hand the residual and the updated $\tx$ (steps 1 and 3) are computed in higher precision. Algorithm \ref{alg:iter_ref} gives the precisions that are used for each task of the algorithm.
When the solves are performed using LU factorization, then we can use the L and U factors produced in step 0 to achieve step 2, which still reduces the computational cost of the overall process.

A general framework that describes mixed-precision algorithms for linear systems can be found in
 \cite{CH18}. A special case can be derived that corresponds to a common use in heterogeneous computing (see e.g., \cite{BBDK09}) where we have a working (high) precision $u$ (e.g., double precision, $u=10^{-16}$) and a low precision $u_l$ (e.g., single precision, $u=10^{-8}$). In this situation the most expensive part is performed at precision $u_l$ on an accelerator like a GPU to take advantage of low precision arithmetic while the
 other steps remain executed on the CPU.
  
\begin{algorithm}
\caption{Mixed-precision linear system solution using 2 precisions.}\label{alg:iter_ref}
\begin{algorithmic}
\State \textbf{Input}:  $A \in \mathbb{R}^{N\times N}, b \in \mathbb{R}^{N}$ stored at precision $u$ with $u \ll u_l$.
\State Compute a solution $x_0$ to $Ax=b$ at precision $u_l$.
\While{desired accuracy not reached}
\State Compute $r_i = b - Ax_i$ at precision $u$.
\State Solve $A e_i = r_i$ at precision $u_l$.
\State Update $x_{i+1} = x_i + e_i$ at precision $u$.
\EndWhile
\end{algorithmic}
\end{algorithm}

The limiting accuracy does not depend on $u_l$ in the system solving $Ae_i = r_i$, it only depends on the choice for $u$~\cite{HM22}. Therefore we can work with $u \ll u_l$ and still get an accurate result if $u$ is well chosen. Traditionally, we have $u = u_l^2$ to get a limiting accuracy of order $u$. Note that a large value of $u_l$ accelerates each iteration but requires more iterations. 

\section{Iterative refinement for QSVT-based linear system solution}\label{sec:iterref}

\subsection{Algorithm}
Suppose we can solve $Ax=b$ via the QSVT method with a (low) accuracy $\ep_l$, i.e., that can produce a solution $\tx$ such that $\norm{x-\tx} \leq \ep_l \norme{x}$. Following \cite[p. 126]{L22}, to obtain an accuracy (relative error) of order $\ep_l$ on the non-normalized solution $\tx$, we need to approximate the inverse function on $[-1, -1/ \kappa]\cup[1/\ka,1]$ with an error $\ep' =\bigo(\ep_l / \ka)$.

We want to improve the quality of the solution given by the QSVT described in Section:\ref{sec:itergeneral} by refining it in higher precision $u$ as presented in 
Algorithm \ref{alg:iter_ref_QSVT} until we achieve an accuracy $\ep$.
In this algorithm we will use two different types of processors: the solving phases via the QSVT will be achieved on a quantum processor (QPU) while the residual $r_i$ and the correction to the solution $x_{i+1}$ will be computed on a classical processor (CPU). Note that in practice the QSVT routine, like most quantum algorithms, is by nature hybrid and not fully executed on the QPU. Indeed some tasks are performed on the CPU (matrix decomposition before block-encoding, preprocessing of the state preparation, computation of angles for the polynomial approximation, post-processing after measurement). 
Note also that the hybrid scheme of Algorithm \ref{alg:iter_ref_QSVT} requires to store $A$ and $b$ in a precision at least $u$ on the CPU. 

\begin{remark}\label{rem:normalization}
One particularity of using iterative refinement in quantum algorithms is that we deal with quantum states. Therefore before solving $Ax = b$, we need to normalize $b$ as
$$
A \frac{x}{\norme{b}} = \frac{b}{\norme{b}}.
$$
The sampling at the end of the QSVT will provide $\eta=\frac{x}{\norme{x}}$ and we are able to recover $\norme{x}$ by solving the minimization problem
$$
argmin_{\mu\in \mathbb{R}} \lvert A(x+\mu \eta)- b \rvert.
$$    
This phase occurs for each call to the QSVT routine and is performed on the CPU device.
\end{remark}

The \textbf{stopping criterion} for our iterative refinement will be based on the scaled residual defined 
by $\omega= \frac{\norm{b-A\tx}}{\norm{b}}$. We aim at finding $\tx$ such that $\omega \leq \ep$.
When $\ka$ is not too large, $\omega$ classically provides close bounds for the relative error since we have (see, e.g.,~\cite[p. 68]{LAUB})
\begin{equation}\label{residual_error}
\frac{\omega}{\ka} \leq \frac{\norm{x-\tx}}{\norm{x}} \leq \ka \omega.
\end{equation}
Moreover $\omega$ is independent to scaling of $Ax$ and $b$ by a same coefficient, which will occur because quantum algorithms require $b$ to be normalized (see Remark \ref{rem:normalization}).
 
\begin{algorithm}[ht!]
\caption{Iterative refinement for QSVT-based linear system solution.}\label{alg:iter_ref_QSVT}
\begin{algorithmic}
\State \textbf{Input}:  $A$, $b$, QSVT accuracy $\ep_l$, targeted accuracy $\ep$ at precision $u$.
\State Compute $x_0=A^{-1}b$ at accuracy $\ep_l$ using QSVT \textcolor{red}{(QPU)}.
\While{accuracy $\ep$ is not reached on $x_i$}
\State Compute $r_i = b - Ax_i$ in high precision $u$ \textcolor{blue}{(CPU)}.
\State Compute $e_i = A^{-1}r_i$ at accuracy $\ep_l$ with QSVT \textcolor{red}{(QPU)}.
\State Update $x_{i+1} = x_i + e_i$ in high precision $u$ \textcolor{blue}{(CPU)}.
\EndWhile
\end{algorithmic}
\end{algorithm}

To compute $x_0$ we first need to generate all the quantum circuits/routines that will be executed on the QPU. This generation is called quantum circuit synthesis, which corresponds in classical computing to the compilation phase, and is executed on a classical computer. Computing $x_0$ requires the use of 3 quantum routines:
\begin{itemize}
    \item State preparation implementation for the normalized value of $b$,
    \item Block-encoding of $A^\dagger$,
    \item QSVT routine implementing $A^{-1}$ (relying on the block encoding of $A^{\dagger}$).
\end{itemize}

Quantum routines, once compiled and transferred to the QPU, do not have to be redefined. In particular, the concept of ``linker-loader", widely used in classical computing, is part of the architecture for integrating QPUs into HPC resources (see, e.g., Figure 4 in \cite{BSH20}). Thus, each iteration of Algorithm~\ref{alg:iter_ref_QSVT} requires to transfer a reduced amount of data compared to computing $x_0$ (since only $r_i$ needs to be encoded and transferred to the QPU).


\begin{remark}
We point out that the result is obtained through sampling/measurement. Consequently, this hybrid algorithm relies on the ``collapse'' of the quantum solution and cannot be used in a subsequent quantum algorithm, except if we re-encode the solution in the quantum computer via state preparation.
\end{remark}

\subsection{Convergence and accuracy}
In this section we compute the scaled residual obtained after each iteration of the refinement and a bound on the iteration count. In our demonstrations we omit the effect of rounding errors but the high precision $u$ (unit roundoff) used for CPU computations should be chosen accordingly to the target precision $\ep$, a safe choice being 
$u=\theta \ep$ with $\theta \leq 1$.



\begin{theorem}\label{theo:residual}
Suppose the QSVT can solve $Ax=b$ with low accuracy $\ep_l$ with $\ep_l \ka < 1 $ and that 
we apply mixed-precision iterative refinement as given in Algorithm~\ref{alg:iter_ref_QSVT} with high precision $u$ when computing the residual and solution update.
Then after $i$ iterations we have $\norm{r_i} \leq (\ep_l \ka)^{i+1} \norm{b}$.
The number of iterations to obtain a solution $\tx$ such that 
$\frac{\norm{b-A\tx}}{\norm{b}} \leq \ep$
is bounded by 
$\lceil\log(\ep) / \log(\ep_l \ka)\rceil$.
\end{theorem}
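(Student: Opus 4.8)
The plan is to track how the residual shrinks from one iteration to the next, exploiting the fact that the QSVT solver returns a correction vector whose relative error is controlled by $\ep_l$. The key observation is that at each step we solve $Ae_i = r_i$ only approximately: the QSVT produces $\tilde e_i$ with $\norm{e_i - \tilde e_i} \leq \ep_l \norm{e_i}$, where $e_i = A^{-1} r_i$ is the exact correction. I would first establish the single-step contraction $\norm{r_{i+1}} \leq \ep_l \ka \norm{r_i}$, and then the geometric decay and the iteration bound follow by induction and a logarithm.

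First I would set up the recurrence. Write the updated residual as
$$
r_{i+1} = b - A x_{i+1} = b - A(x_i + \tilde e_i) = r_i - A\tilde e_i.
$$
Since the exact correction satisfies $A e_i = r_i$, this gives $r_{i+1} = A e_i - A \tilde e_i = A(e_i - \tilde e_i)$. Taking norms and using submultiplicativity,
$$
\norm{r_{i+1}} \leq \norm{A}\,\norm{e_i - \tilde e_i} \leq \norm{A}\, \ep_l \norm{e_i} = \ep_l \norm{A}\,\norm{A^{-1} r_i} \leq \ep_l \norm{A}\,\norm{A^{-1}}\,\norm{r_i}.
$$
Recognizing $\norm{A}\,\norm{A^{-1}} = \ka$ (the condition number in the spectral norm) yields the single-step bound $\norm{r_{i+1}} \leq \ep_l \ka \norm{r_i}$.

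Next I would handle the base case and iterate. For the initial solve, the QSVT applied to the normalized system produces $x_0$ with relative error $\ep_l$, so by the same argument $r_0 = b - A x_0 = A(x - x_0)$ gives $\norm{r_0} \leq \norm{A}\,\ep_l \norm{x} = \ep_l \norm{A}\,\norm{A^{-1}b} \leq \ep_l \ka \norm{b}$. Composing the per-step factor $i$ times with this base case gives $\norm{r_i} \leq (\ep_l \ka)^{i+1} \norm{b}$, which is the first claim. For the iteration count, the stopping criterion $\norm{r_i}/\norm{b} \leq \ep$ is guaranteed once $(\ep_l \ka)^{i+1} \leq \ep$; taking logarithms (and noting $\ep_l \ka < 1$ means $\log(\ep_l \ka) < 0$, so the inequality flips) gives $i + 1 \geq \log(\ep)/\log(\ep_l \ka)$, hence the bound $\lceil \log(\ep)/\log(\ep_l \ka)\rceil$ on the number of iterations.

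The main obstacle I anticipate is not the algebra of the recurrence but making the \emph{error model} precise. The statement $\norm{e_i - \tilde e_i} \leq \ep_l \norm{e_i}$ must be justified as the correct interpretation of ``QSVT solves $A e_i = r_i$ at accuracy $\ep_l$''; this relies on the relation (recalled in Section~\ref{sec:iterref}) between the polynomial approximation error $\ep' = \bigo(\ep_l/\ka)$ on $[-1,-1/\ka]\cup[1/\ka,1]$ and the resulting relative forward error $\ep_l$ on the unnormalized solution. A subtlety is the normalization of Remark~\ref{rem:normalization}: each QSVT call works with $r_i/\norm{r_i}$ and recovers the scale classically, and I would need to confirm that the relative error $\ep_l$ is preserved under this rescaling (it is, since $\omega$ and relative errors are scale-invariant). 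The convergence argument itself also implicitly neglects the rounding errors of the CPU steps, which is acknowledged in the text by the assumption $u = \theta\ep$ with $\theta \leq 1$; I would flag that the clean geometric bound holds in exact CPU arithmetic and that $u$ controls the attainable limiting accuracy.
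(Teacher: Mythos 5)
Your proof is correct and takes essentially the same route as the paper: a base-case bound $\norm{r_0} \leq \ep_l \ka \norm{b}$, a per-iteration contraction $\norm{r_{i+1}} \leq \ep_l \ka \norm{r_i}$, induction, and then a logarithm under the hypothesis $\ep_l\ka < 1$. The only difference is cosmetic: where the paper invokes the residual/forward-error inequality of Equation (\ref{residual_error}), you re-derive it inline via $r_{i+1} = A(e_i - \tilde e_i)$ and $\norm{A}\,\norm{A^{-1}} = \ka$, and your explicit handling of the sign flip when dividing by $\log(\ep_l\ka) < 0$ is, if anything, slightly cleaner than the paper's.
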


\begin{proof}
We first compute $x_0$ with the QSVT with $\norm{x-x_0} \leq \ep_l \norm{x_0}$. Then using the left part of Equation (\ref{residual_error}) we have $$\norm{r_0}=\norm{b-Ax_0} \leq \ep_l \ka \norm{b}.$$ 
Then for the first iteration we have
$$
\norm{r_1}=\norm{b-Ax_1}
=\norm{b-A(x_0+e_0)}
=\norm{r_0 - Ae_0}.
$$
Using again Equation (\ref{residual_error}) to the linear system $Ae=r_0$ we get
$$\norm{r_1} = \norm{r_0 - Ae_0} \leq \ep_l \ka \norm{r_0} \leq (\ep_l\ka)^2 \norm{b}.
$$
Then by straightforward recurrence on $i$ we obtain 
$$\forall i, ~\norm{r_i} \leq (\ep_l \ka)^{i+1} \norm{b}.$$ 
The desired final error $\ep$ will be obtained when we will have
$$
\frac{\norm{b-A x_i}}{\norm{b}} \leq (\ep_l\ka)^{i+1} \leq \ep,  
$$
which yields $i\leq \log(\ep) / \log(\ep_l \ka) - 1$ and thus $\lceil\log(\ep) / \log(\ep_l \ka)\rceil$ will be an upper bound for the iterative refinement process.

\end{proof}

Theorem \ref{theo:residual} states that the convergence of Algorithm~\ref{alg:iter_ref_QSVT} is ensured as long as $\ep_l \ka < 1$. The scaled residual contracts by a factor $\ep_l \ka$ at each iteration until it reaches a maximum value of $\bigo(\ep)$ after at most $i_{max} = \lceil\log(\ep) / \log(\ep_l \ka)\rceil$ iterations. The quantity $i_{max}$ will be used in the following section to evaluate the complexity of the linear system solver. 

\subsection{Complexity analysis}
Since Algorithm \ref{alg:iter_ref_QSVT} is hybrid, the resulting complexity includes quantum and classical costs which are presented in this section. We also mention the data communication between the 2 devices.
\subsubsection{Quantum cost}
The quantum complexity of the QSVT (denoted by $\mathcal{C}_{QSVT}$ and detailed in Remark \ref{rem:QSVT_complexity}) mainly relies on the cost $\mathcal{B}$ of the block-encoding circuit used to encode $A^\dagger$, which depends on the chosen block-encoding method (see Section \ref{sec:block_encoding}).
Then the quantum complexity of the mixed-precision QSVT solver depends on the 3 following components: 
\begin{itemize}
    \item The \textbf{number of calls to the solver} is evaluated using the upper bound provided in Theorem \ref{theo:residual} for the mixed-precision solver.
    \item The \textbf{complexity of the QSVT} $\mathcal{C}_{QSVT}$, which corresponds to the cost of the multiple calls to the block-encoding. We recall that the number of calls to the block-encoding is given by the degree of the polynomial as explained in Remark \ref{rem:QSVT_complexity}.
    \item The \textbf{number of samples} necessary to achieved a targeted error $\epsilon$ is $\mathcal{O}(1/\epsilon^2)$.
\end{itemize}
Then we have
$$
\mathrm{Total \; complexity} \; =\;  \# \mathrm{solves} \; \times \; \mathcal{C}_{QSVT} \; \times \; \# \mathrm{samples}.
$$

In Table \ref{tab:complexity}, we compare the complexity when using directly QSVT in high precision (one call to the QSVT and $\ep_l=\ep$) and when using iterative refinement in mixed-precision.

\begin{table}[ht]
    \centering
    \begin{tabular}{lcc}
         & QSVT only & QSVT with iterative refinement \\ 
         \midrule
        \# solves & $1$ & $ \leq \left\lceil\dfrac{log(\epsilon)}{log(\kappa \; \epsilon_{l})}\right\rceil$ \\
        \midrule
        $\mathcal{C}_{QSVT}$ & $\mathcal{O}\left(\mathcal{B}\;\kappa \; log(\kappa / \epsilon\right))$ & $\mathcal{O}\left(\mathcal{B}\;\kappa\; log(\kappa / \epsilon_{l}\right))$\\
        \midrule
        \# samples & $\mathcal{O}(1/\epsilon^2)$ & $\mathcal{O}(1/\epsilon_{l}^2)$\\ 
        \midrule
        Total & $\mathcal{O}\left(\dfrac{\mathcal{B}\kappa}{\epsilon^2} \; log(\kappa / \epsilon)\right)$ & $\mathcal{O}\left(\left\lceil\dfrac{log(\epsilon)}{\log({\kappa \;\epsilon_{l}})}\right\rceil \dfrac{\mathcal{B}\kappa}{\epsilon_{l}^2} \; log(\kappa / \epsilon_{l})\right)$
    \end{tabular}
        \caption{Quantum cost for QSVT-based LS solution with and without iterative refinement.}

    \label{tab:complexity}
\end{table}

Using iterative refinement jointly with QSVT to solve linear systems provides some advantages. First, working with a lower precision decreases the number of samples needed to reach this precision. Actually, to get a precision $\epsilon$ we need $\mathcal{O}(1/\epsilon^2)$ samples, and then we need to run the quantum circuit $\mathcal{O}(1/\epsilon^2)$ times. 
Then, something more specific to the QSVT is that we can adapt the precision of the polynomial approximation of the inverse function to the accuracy $\epsilon_l$ used in the iteration. Reducing this precision also reduces the degree of the polynomial and the resulting number of calls to the block-encoding $U$ of $A^\dagger$ (and $U^\dagger$).
Finally we will verify in our experiments in Section~\ref{sec:exp} that, for our experimental values of $\ep$, $\ep_l$ and $\kappa$ (with $\epsilon < \epsilon_l < 1/\kappa$) and using less than $\lceil\log(\ep) / \log(\ep_l \ka)\rceil$ iterations, the quantum cost is smaller when using iterative refinement.

\subsubsection{Classical cost}\label{sec:discussion}
First, the QSVT (with or without iterative refinement) requires to pre-process (on a classical computer) the input matrix to create its block-encoding circuit \cite{HBG24,KBG24}. This task can be computationally expensive especially for dense and unstructured matrices (e.g., $\bigo (n4^n)$ flops using~\cite{HBG24}). In both columns of Table \ref{tab:complexity} this pre-processing step is performed only once (for the iterative approach the result is reused by the QPU throughout the iterations). Another initial computation consists in finding the phases $\Phi$ for the QSVT, using algorithm such as \cite{DLNW23, NJ24}. The computational cost of this task can scale linearly with the condition number $\ka$ (see \cite{NJ24}).
An additional cost concerns the state preparation of the right-hand sides with the generation of the corresponding circuits. For instance the algorithm provided in \cite{KP16} relies on a tree that needs to be classically computed and can be performed in $\bigo(N)$ flops.
Then, for the quantum iterative refinement method we need to perform some processing on the CPU before and after each solve. These tasks consist in normalizing the residual before the solve, then de-normalizing the result sent back from the QPU at each iteration (see Remark \ref{rem:normalization}) and finding the associated residual. The normalization can be performed in $\bigo(N)$ flops and finding the residual is a matrix-vector operation ($\bigo(N^2)$ flops). The de-normalization step can be performed for example using the Brent's method \cite{Brent73} (in the worst case scenario the complexity is $\bigo(log(1/\ep))$).

\subsubsection{Remarks on data communication}
Our algorithm involves data communication between the CPU and the QPU. At the beginning we send the circuit representation of the block-encoding of $A^\dagger$ (denoted as BE($A^\dagger$)) from the CPU to the QPU. This communication cost will depend on the matrix $A$ but also on the method used to perform the block-encoding and that determines the circuit's size (see Section \ref{sec:block_encoding}). This data transfer has to be performed only once because the matrix $A$ remains unchanged during the whole refinement process. Another transfer concerns the vector of phases $\Phi$ used in the QSVT circuit. This vector is of size $d$, where $d$ is the degree of the polynomial approximating the inverse function in Equation (\ref{eq:poly}).
Then at each solve step, we need to transfer the right-hand sides ($b$ and the residuals) from the CPU to the QPU. This is performed by sending the corresponding circuit description for state preparation (denoted as SP($b$), SP($r_i$), etc). The size of these circuits depends also on the method used to perform the state preparation. Moreover, after the solve phase, we need to transfer from the QPU to the CPU the sampled solution, which is a vector of size $N=2^n$. These data transfers between CPU and QPU are depicted in Figure \ref{fig:schema_comm}. 

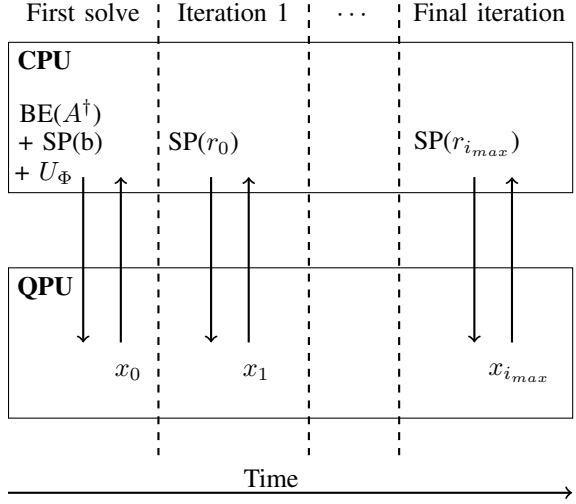
\begin{figure}[ht]
    \centering
    \begin{tikzpicture}
    \tikzstyle{rect} = [draw, minimum width=7.5cm, minimum height=2cm, anchor=north west]

    \node[rect] (cpu) at (0,0) {};
    \node[anchor=north west] at (cpu.north west) {\textbf{CPU}};

    \node[rect] (qpu) at (0,-3) {};
    \node[anchor=north west] at (qpu.north west) {\textbf{QPU}};

    \draw[->, thick] (1,-1.8) -- (1,-4);
    \node[] at (0.7,-0.95){BE($A^\dagger$)};
    \node[] at (0.7, -1.35) { + SP(b)};
    \node[] at (0.48, -1.75) {+ $U_\Phi$};
    \draw[->, thick] (1.5,-4) -- (1.5,-1.8);
    \node[] at (1.6,-4.4){$x_0$};

    \draw[dashed, thick] (2,0.5) -- (2,-5.5);
    \node[above] at (1,0.15) {First solve};

   \node at (0.5, -6.5) {~~};

    \draw[->, thick] (2.7,-1.8) -- (2.7,-4);
    \node[] at (2.6, -1.35) {SP($r_0$)};
    \draw[->, thick] (3.2,-4) -- (3.2,-1.8);
    \node[] at (3.3,-4.4){$x_1$};

    \draw[dashed, thick] (4,0.5) -- (4,-5.5);
    \node[above] at (3,0.15) {Iteration 1};

    \draw[dashed, thick] (5.2,0.5) -- (5.2,-5.5);
    \node[above] at (4.6,0.15) {$\cdots$};

    \draw[->, thick] (6.2,-1.8) -- (6.2,-4);
    \node[] at (6.1, -1.35) {SP($r_{i_{max}}$)};
    \draw[->, thick] (6.7,-4) -- (6.7,-1.8);
    \node[] at (6.8,-4.4){$x_{i_{max}}$};
    \node[above] at (6.4,0.15) {Final iteration};

    \draw[->, thick] (0,-6) -- (7.5,-6);
    \node[] at (3.5, -5.8) {Time};

\end{tikzpicture}
    \caption{CPU-QPU communication scheme for Algorithm \ref{alg:iter_ref_QSVT} (BE=block-encoding, SP=state preparation).}
    \label{fig:schema_comm}
\end{figure}


\subsubsection{Practical example}\label{sec:usecase}

Let us consider the one-dimensional Poisson equation
\begin{equation}\label{eq:poisson}
\forall x \in (0,1), -u''(x) = f(x),     
\end{equation}
with the Dirichlet boundary conditions
$
u(0) = u(1) = 0
$.
This problem can be solved using the finite difference method by discretizing Equation (\ref{eq:poisson}) with a step $h=1/(N+1)$ and $u_j=jh$, $f_j = f(jh)$. 

The solution can be obtained by solving the linear system provided in Equation \ref{eq:tridiag}.

\begin{equation}\label{eq:tridiag}
\frac{1}{h^2}
\begin{pmatrix}
    2 & -1 & & & 0\\
    -1 & 2 & -1 & & \\
    & -1 & \ddots& \ddots& \\
    & & \ddots & 2 &-1 \\
    0 & & & -1 & 2  
\end{pmatrix}
\begin{pmatrix}
    u_1\\
    \vdots\\
    u_N
\end{pmatrix}
=
\begin{pmatrix}
    f_1\\
    \vdots\\
    f_N
\end{pmatrix}.
\end{equation}

\paragraph{Block-encoding}

For the matrix given in Equation~(\ref{eq:tridiag}) (with $N=2^n$) we use the block-encoding technique provided in \cite{BE_EDP_2024} resulting in the circuit given in Figure \ref{fig:circuitBE}.

\begin{figure}[ht!]
    \centering
    \includegraphics[width = \linewidth]{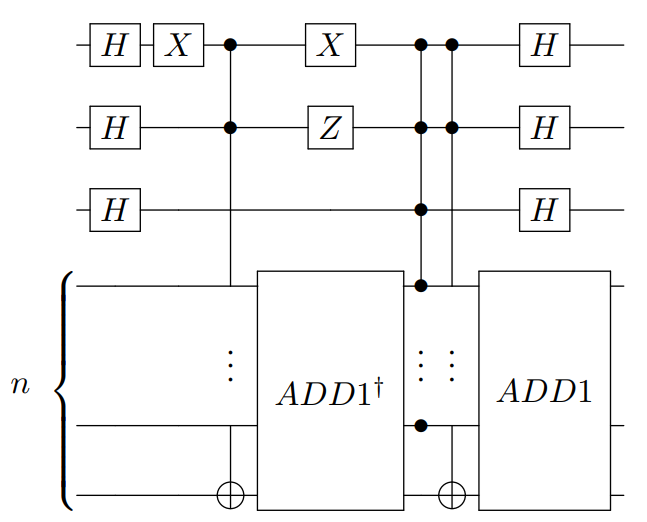}
    \caption{Circuit for the block-encoding of the tridiagonal matrix in Equation (\ref{eq:tridiag}).}
    \label{fig:circuitBE}
\end{figure}

\paragraph{Complexity}

In this section we detail the quantum and classical complexities of our hybrid algorithm when solving Equation (\ref{eq:tridiag}).
The classical cost is expressed in floating-point operations (flops) and the quantum cost is expressed in number of T-gates, because the depth of the circuit requires to use a fault-tolerant quantum computer~\cite{HFDM12}.
The subroutines used in our algorithm are: 
\begin{itemize}
    \item Tree-based state preparation (SP) described in \cite{KP16},
    \item Block-encoding (BE) provided in \cite{BE_EDP_2024},
    \item QSVT quantum circuit ($U_{\Phi}$...) proposed in \cite{GSLW19},
    \item QSVT phases ($\Phi$) computed as in algorithm given in \cite{NJ24},
    \item Solution via Brent's method \cite{Brent73} (de-normalization, see Remark \ref{rem:normalization}).
\end{itemize}

In Table \ref{tab:Poisson} we summarize the classical and quantum costs for the first solve (First) and at each iteration (Iter). The block-encoding of the tridiagonal matrix is predetermined (expressed analytically in \cite{BE_EDP_2024}) and requires no classical cost. We refer to \cite{KG24, RV25} for the decomposition in T gates of the multi-controlled Toffoli gates and adders for the different quantum circuits of our implementation.

\begin{table}[ht!]
\centering
\begin{tabular}{|l|l|c|c|}
\hline
 &  & Classical & Quantum\\ 
\hline

\multirow{4}{*}{First}
& SP & $\bigo(2^n)$ & $\bigo(polylog(n))$ \\
& BE  & - & $\bigo(n\ka \log(\ka /\ep_l))$\\
& QSVT ($\Phi$, $U_{\Phi}$...) & $\bigo(\ka)$ & $\bigo(n\ka \log(\ka /\ep_l))$ \\
& Solution & $\bigo(4^n + \log(1/\ep))$ & - \\
\hline
\hline

\multirow{4}{*}{Iter}
& SP & $\bigo(2^n)$ & $\bigo(polylog(n))$ \\
& BE & - & $\bigo(n\ka \log(\ka /\ep_l))$ \\
& QSVT ($\Phi$, $U_{\Phi}$...) & - & $\bigo(n\ka \log(\ka /\ep_l))$ \\
& Solution & $\bigo(4^n + \log(1/\ep))$& - \\

\hline
\end{tabular}
\caption{Complexity for solving the Poisson equation with mixed precision iterative refinement.}
\label{tab:Poisson}
\end{table}

We point out that this use case is given as an example to illustrate the complexity breakdown of our algorithm. Current classical solvers are efficient at solving this type of linear systems (in $\bigo(N)$ flops~\cite{BHM00}). Moreover the condition number $\ka$ exhibits a rapid increase with the problem size (in $\bigo(N^2)$ with no preconditioning~\cite{LM21}) which makes this linear system solution very expensive for large matrices using QSVT, given the current state of the art.

\section{Numerical experiments}\label{sec:exp}

\subsection{Experimental framework}

We have implemented the QSVT solver and the iterative refinement using Python and the \textit{myQLM} simulator \cite{myqlm}.
Our implementation relies on the following components:
\begin{itemize} 
    \item State preparation using the tree-based method described in \cite{KP16}, 
    \item Polynomial approximation of the inverse function as detailed in \cite{GSLW19},
    \item Algorithm for determining the QSVT angles: for small condition numbers ($\ka < 100$), we perform the numerical computation of the angles, using the symmetric QSP approach from \cite{DLNW23}. For bigger condition numbers, we use the estimation approach provided in \cite{NJ24},
    \item Implementation of the QSVT implemented with myQLM, following \cite{GSLW19},
    \item Iterative refinement in Python, using numpy and scipy libraries. 
\end{itemize}

In the following experiments, the size of the problem was set to $N = 16$ i.e., $n=4$ qubits. 
The matrix $A \in \mathbb{R}^{N \times N}$ and the vector $b \in \mathbb{R}^N$ are both randomly generated, with $\norme{b} = 1$ for simplicity. We have also simulated the algorithm with the tridiagonal matrix given in Section~\ref{sec:usecase}. The obtained results are similar in terms of convergence and then they are not mentioned in the following.
Note that for each simulation we limited the execution time to one hour, which enabled us to simulate problems with condition numbers of $\bigo(10^2)$.

\subsection{Results}
We present in Figure \ref{fig:iterations} the evolution of the scaled residual obtained at each iteration, until convergence. We consider a small condition number $\ka =10$ and three values for $\ep_l$. The targeted accuracy is $\ep=10^{-11}$.
We observe that in all cases the bound $\lceil\log(\ep) / \log(\ep_l \ka)\rceil$ obtained in Theorem \ref{theo:residual} is a sharp estimate for the iteration count. We also observe that due to the small values of $\ka$ the scaled residual is here also a good estimate of the forward error, as expected from Equation~(\ref{residual_error}).

\begin{figure}[htbp!]
        \centering
        \includegraphics[width=0.8\linewidth]{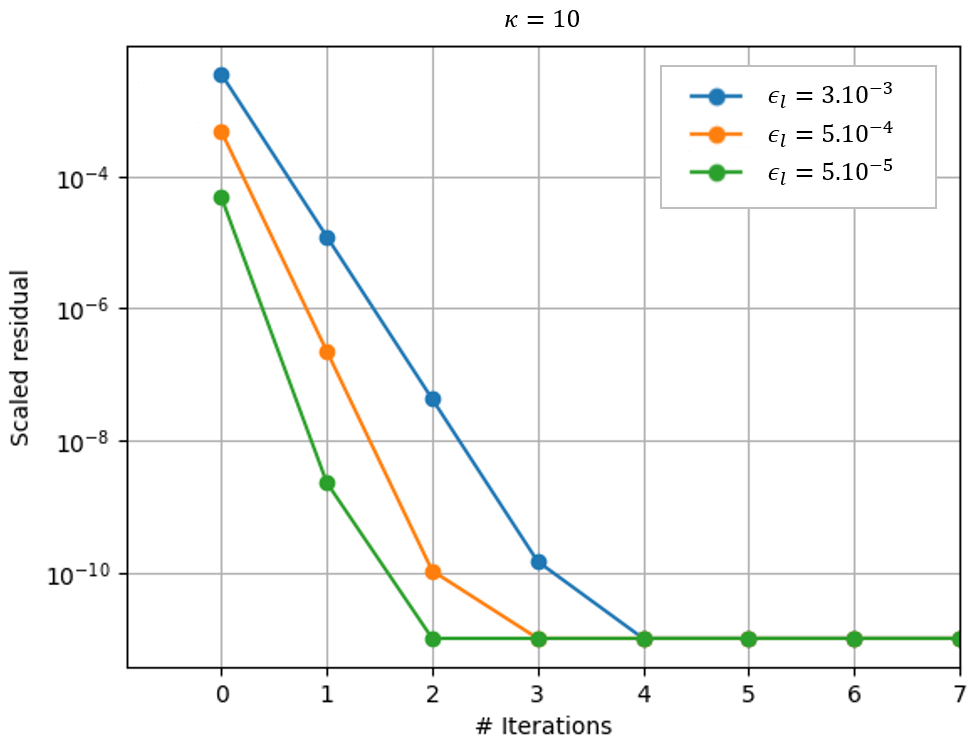}
    \caption{Scaled residual until convergence for $\ka = 10$, targeted accuracy $\ep=10^{-11}$, and various values of $\ep_l$.}
    \label{fig:iterations}
\end{figure}

In Figure \ref{fig:largekappa} we present the evolution of the scaled residual at each iteration for larger condition numbers. For these experiments, the polynomial approximation and the QSVT angles are computed by the algorithm implemented in \cite{NJ24}. In this context, we only have control on $\kappa$ since the value of $\epsilon_l$ is automatically determined by the algorithm from \cite{NJ24}. Our experiments show that we obtain a satisfying convergence with a number of iterations still lower than the bound from Theorem \ref{theo:residual}.

\begin{figure}
    \centering
    \includegraphics[width=0.8\linewidth]{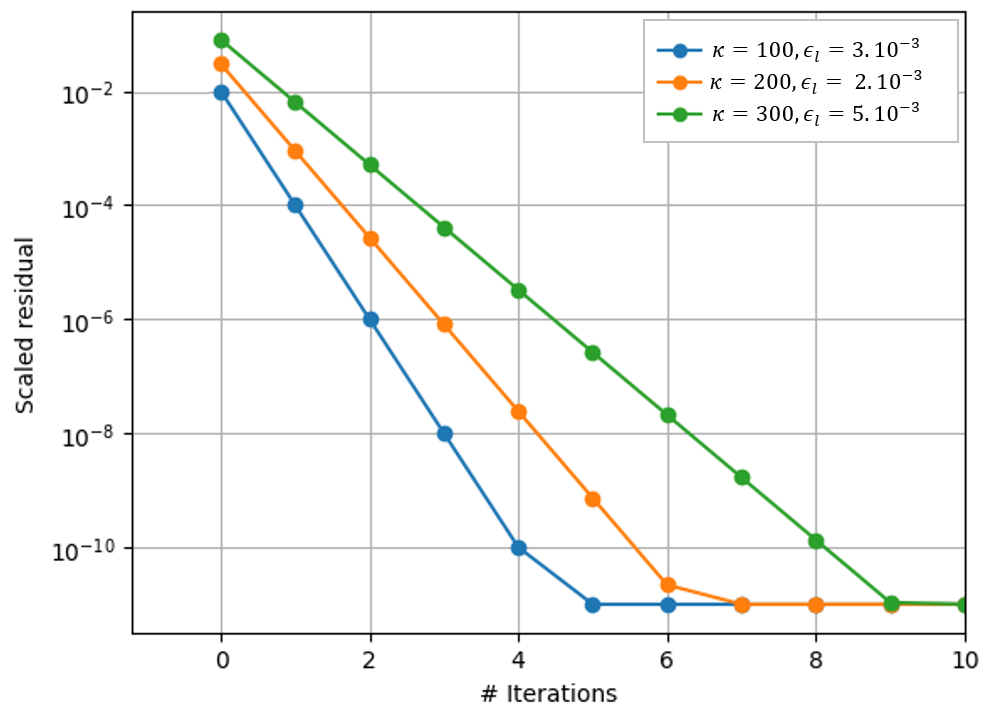}
    \caption{Scaled residual until convergence, $\kappa = 100,200,300$.}
    \label{fig:largekappa}
\end{figure}

\begin{figure}[htbp!]
    \centering
    \includegraphics[width=0.8\linewidth]{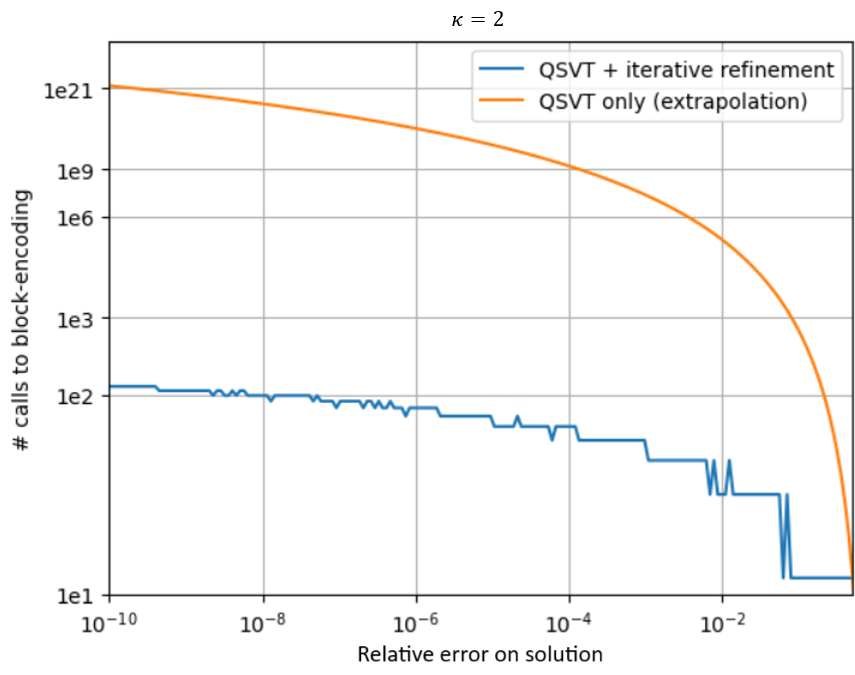}
    \caption{Complexity in calls to block-encoding for QSVT with and without iterative refinement, $\kappa=2$.}
    \label{fig:complexity}
\end{figure}

Then, in Figure \ref{fig:complexity}, we compare the complexity of the LS solving with $\kappa = 2$ for the QSVT with and without mixed-precision iterative refinement. In this numerical experiment we consider that the main cost of the QSVT relies in the block-encoding, so the complexity is expressed as the number of calls to the block-encoding of the matrix $A^\dagger$. The results for the QSVT without iterative refinement are extrapolated from the theoretical complexity provided in Table \ref{tab:complexity} since the computation would be intractable for both the polynomial approximation and the simulation of the quantum circuit. On the contrary, the results for the QSVT with mixed-precision iterative refinement are obtained after running our algorithm using a quantum simulator and choosing $\epsilon_l \approx 1/\kappa$. The method with iterative refinement is clearly advantageous for $\epsilon << \epsilon_l$. Note that the two curves coincide at $\epsilon = \epsilon_l$. With larger condition numbers, the gap between the two curves would be bigger, according the complexity figures given in Table \ref{tab:complexity}. 

\section{Conclusion}\label{sec:conclusion}

We have proposed a mixed-precision hybrid CPU/QPU solver for linear systems that computes a first solution in low precision using the (expensive) QSVT method and then refines the solution in higher precision to achieve the desired accuracy.
The main advantage of this method is that it enables us to achieve satisfying accuracy by using affordable quantum resources since we can use a limited precision for the QSVT solver. This solver illustrates what a typical hybrid CPU/QPU algorithm would be in the future that would leverage the strength of each architecture.

The interest of developing such methods is to exploit existing theoretical results for mixed-precision algorithms used in classical computing.
Our algorithm has been tested in simulation only which reduces the possibility of scaling to larger problems or handling ill-conditioned matrices. It requires further testing on real quantum machine when error-free hardware will be easily accessible to the scientific community.

\section*{Acknowledgement}
This work is part of the HQI initiative ({\tt www.hqi.fr})
and is supported by France 2030 under the French National
Research Agency award number “ANR-22-PNCQ-0002”.
\bibliographystyle{plain}
\bibliography{bib}

@misc{LINPACK,
 author    = {A. Petitet and R. C. Whaley and J. Dongarra and A. Cleary},
 title     = {{HPL} - A Portable Implementation of the High-Performance {L}inpack Benchmark for Distributed-Memory Computers},
 year       = {2018}
}

@book{LAPACK,
    author = {E. Anderson and Z. Bai and C. Bischof and S. Blackford and J. Demmel and J. Dongarra and J. Du Croz and A. Greenbaum and S. Hammarling and A. McKenney and D. Sorensen},
    title = {{LAPACK} Users' Guide},
    publisher = {SIAM},
    address   = {Philadelphia},
    year = 1999,
    note = {3rd edition}
}

@book{H96,
    author = "Nicholas J. Higham",
    title = "Accuracy and Stability of Numerical Algorithms",
    publisher = "SIAM",
    address =   "Philadelphia",
    year = 1996
}

@book{LAUB,
    author = "Alan J. Laub",
    title = "Computational matrix analysis",
    publisher = "SIAM",
    address =   "Philadelphia",
    year = 2012
}

@book{GVL13,
    author    = {G.~H. Golub and C.~F. {Van Loan}},
    title     = {Matrix Computations},
    publisher  = {The Johns Hopkins University Press},
    address   = {Baltimore},
    year       = {2013},
    note      = {Fourth edition}
}

@book{W63,
    title={Rounding Errors in Algebraic Processes},
    author={Wilkinson, J. H.},
    series={Prentice-Hall Series in Automatic Computation},
    url={https://books.google.fr/books?id=CcNBvwEACAAJ},
    year={1963},
    publisher={Prentice-Hall}
}

@article{Mixed_2016, 
    title={Linear algebra software for large-scale accelerated multicore computing}, 
    volume={25}, 
    DOI={10.1017/S0962492916000015}, 
    journal={Acta Numerica}, 
    author={Abdelfattah, A. and Anzt, H. and Dongarra, J. and Gates, M. and Haidar, A. and Kurzak, J. and Luszczek, P. and Tomov, S. and Yamazaki, I. and YarKhan, A.}, year={2016}, 
    pages={1–160}
}

@article{BBDK09,
    author = {Marc Baboulin and Alfredo Buttari and Jack Dongarra and Jakub Kurzak and Julie Langou and Julien Langou and Piotr Luszczek and Stanimire Tomov"},
    title = {Accelerating scientific computations with mixed precision algorithms},
    journal = {Computer Physics Communications},
    volume = {180},
    year = {2009},
    pages = {2526-2533}
}

@article{CH18,
    author = {Carson, Erin and Higham, Nicholas J.},
    title = {Accelerating the Solution of Linear Systems by Iterative Refinement in Three Precisions},
    journal = {SIAM Journal on Scientific Computing},
    volume = {40},
    number = {2},
    pages = {A817-A847},
    year = {2018}
}

@inproceedings{Baboulin_Europar,
    author       = {M. Baboulin and
                  S. Donfack and
                  O. Kaya and
                  T. Mary and
                  M. Robeyns},
    title        = {Mixed Precision Randomized Low-Rank Approximation with {GPU} Tensor Cores},
    booktitle    = {Euro-Par 2024: Parallel Processing - 30th European Conference on Parallel and Distributed Processing},
    series       = {Lecture Notes in Computer Science},
    volume       = {14803},
    pages        = {31--44},
    publisher    = {Springer},
    year         = {2024}
}

@article{HM22, 
    title={Mixed precision algorithms in numerical linear algebra},
    volume={31},
    DOI={10.1017/S0962492922000022},
    journal={Acta Numerica},
    author={Higham, Nicholas J. and Mary, Theo},
    year={2022},
    pages={347–414}}

@article{LM21,
    title = {Optimal preconditioners on solving the {P}oisson equation with {N}eumann boundary conditions},
    journal = {Journal of Computational Physics},
    volume = {433},
    pages = {110189},
    year = {2021},
    issn = {0021-9991},
    doi = {https://doi.org/10.1016/j.jcp.2021.110189},
    url = {https://www.sciencedirect.com/science/article/pii/S002199912100084X},
    author = {Byungjoon Lee and Chohong Min}
}

@book{BHM00,
    author = {Briggs, William and Henson, Van and McCormick, Steve},
    year = {2000},
    publisher = {SIAM: Society for Industrial and Applied Mathematics; 2nd edition},
    title = {A Multigrid Tutorial, 2nd Edition}
}

@book{Brent73,
    author="Brent, R. P.",
    title = "Algorithms for Minimization without Derivatives",
    publisher= "Prentice-Hall",
    address = "Englewood Cliffs, NJ",
    year = "1973",
}

@Book{R20,
    author    = "Rivlin, T. J.",
    title     = "Chebyshev Polynomials: From Approximation Theory to Algebra and Number Theory.",
    publisher = "Dover",
    year      = "2020"
}

@inproceedings{BSH20,
    doi = {10.23919/date48585.2020.9116502},
    url = {https://doi.org/10.23919/date48585.2020.9116502},
    year = {2020},
    publisher = {{IEEE}},
    author = {K. Bertels and A. Sarkar and T. Hubregtsen. and M. Serrao and A. A. Mouedenne and A. Yadav and A. Krol and I. Ashraf},
    title = {Quantum Computer Architecture: Towards Full-Stack Quantum Accelerators},
    booktitle = {2020 Design,  Automation {\&} Test in Europe Conference {\&} Exhibition (2020)}
}

@article{HMLG21,
    author={Humble, Travis S. and McCaskey, Alexander and Lyakh, Dmitry I. and Gowrishankar, Meenambika and Frisch, Albert and Monz, Thomas},
    journal={IEEE Micro}, 
    title={Quantum Computers for High-Performance Computing}, 
    year={2021},
    volume={41},
    number={5},
    pages={15-23},
    doi={10.1109/MM.2021.3099140}
}

@article{HFDM12,
    title={Surface code quantum computing by lattice surgery},
    volume={14},
    ISSN={1367-2630},
    url={http://dx.doi.org/10.1088/1367-2630/14/12/123011},
    DOI={10.1088/1367-2630/14/12/123011},
    number={12},
    journal={New Journal of Physics},
    publisher={IOP Publishing},
    author={Horsman, Dominic and Fowler, Austin G and Devitt, Simon and Meter, Rodney Van},
    year={2012},
    pages={123011} 
}

@article{RV25,
    title={Ancilla-free Quantum Adder with Sublinear Depth}, 
    author={Maxime Remaud and Vivien Vandaele},
    year={2025},
    journal = {arXiv:2501.16802}
}

@article{KG24,
    title={Rise of conditionally clean ancillae for optimizing quantum circuits}, 
    author={Tanuj Khattar and Craig Gidney},
    year={2024},
    journal={arXiv:2407.17966} 
}

@article{HBG24,
    doi = {10.1088/1402-4896/ad6499},
    url = {https://dx.doi.org/10.1088/1402-4896/ad6499},
    year = {2024},
    month = {jul},
    publisher = {IOP Publishing},
    volume = {99},
    number = {8},
    pages = {085128},
    author = {Hantzko, Lukas and Binkowski, Lennart and Gupta, Sabhyata},
    title = {Tensorized Pauli decomposition algorithm},
    journal = {Physica Scripta},
}

@INPROCEEDINGS{KBG24,
    author={Koska, Océane and Baboulin, Marc and Gazda, Arnaud},
    booktitle={ISC High Performance 2024 Research Paper Proceedings (39th International Conference)}, 
    title={A Tree-Approach Pauli Decomposition Algorithm with Application to Quantum Computing}, 
    year={2024},
    pages={1-11},
    doi={10.23919/ISC.2024.10528938}
}

@InProceedings{KP16,
    author =	{Kerenidis, Iordanis and Prakash, Anupam},
    title =	{{Quantum Recommendation Systems}},
    booktitle =	{8th Innovations in Theoretical Computer Science Conference (ITCS 2017)},
    pages =	{49:1--49:21},
    series =	{Leibniz International Proceedings in Informatics (LIPIcs)},
    ISBN =	{978-3-95977-029-3},
    ISSN =	{1868-8969},
    year =	{2017},
    volume =	{67}
}

@article{BE_EDP_2024,
    author={Sunheang Ty and Renaud Vilmart and Axel TahmasebiMoradi and Chetra Mang},
    title={Double-Logarithmic Depth Block-Encodings of Simple Finite Difference Method's Matrices}, 
    journal = {arXiv:2410.05241},
    year = {2024}
}

@article{CLVY23,
    author = {Camps, Daan and Lin, Lin and Van Beeumen, Roel and Yang, Chao},
    title = {Explicit Quantum Circuits for Block Encodings of Certain Sparse Matrices},
    journal = {SIAM Journal on Matrix Analysis and Applications},
    volume = {45},
    number = {1},
    pages = {801-827},
    year = {2024},
    doi = {10.1137/22M1484298}
}

@INPROCEEDINGS{CV22,
    author={Camps, Daan and Van Beeumen, Roel},
    booktitle={2022 IEEE International Conference on Quantum Computing and Engineering (QCE)}, 
    title={{FABLE}: Fast Approximate Quantum Circuits for Block-Encodings}, 
    year={2022},
    volume={},
    number={},
    pages={104-113}
}

@article{CW12, 
    author = {A. M. Childs and N. Wiebe},
    title = {Hamiltonian simulation using linear combinations of unitary operations},
    journal={Quantum Information and Computation},
    volume={12},                 
    number={11 - 12},
    pages={0901-0924},
    publisher={Rinton Press},
    year={2012}
}

@article{DLNW23,
    author = {Dong, Yulong and Lin, Lin and Ni, Hongkang and Wang, Jiasu},
    title = {Robust Iterative Method for Symmetric Quantum Signal Processing in All Parameter Regimes},
    journal = {SIAM Journal on Scientific Computing},
    volume = {46},
    number = {5},
    pages = {A2951-A2971},
    year = {2024},
    doi = {10.1137/23M1598192}
}

@article{LYC16,
    title={Methodology of Resonant Equiangular Composite Quantum Gates},
    volume={6},
    number={4},
    journal={Physical Review X},
    publisher={American Physical Society (APS)},
    author={Low, Guang Hao and Yoder, Theodore J. and Chuang, Isaac L.},
    year={2016}
}

@inproceedings{GSLW19, 
    series={STOC ’19},
    title={Quantum singular value transformation and beyond: exponential improvements for quantum matrix arithmetics},
    url={http://dx.doi.org/10.1145/3313276.3316366},
    DOI={10.1145/3313276.3316366},
    booktitle={Proceedings of the 51st Annual ACM SIGACT Symposium on Theory of Computing},
    publisher={ACM},
    author={Gilyén, András and Su, Yuan and Low, Guang Hao and Wiebe, Nathan},
    year={2019},
    collection={STOC ’19}
}

@article{MRTC21,
    title={Grand Unification of Quantum Algorithms},
    volume={2},
    ISSN={2691-3399},
    url={http://dx.doi.org/10.1103/PRXQuantum.2.040203},
    DOI={10.1103/prxquantum.2.040203},
    number={4},
    journal={PRX Quantum},
    publisher={American Physical Society (APS)},
    author={Martyn, John M. and Rossi, Zane M. and Tan, Andrew K. and Chuang, Isaac L.},
    year={2021}
}

@article{NJ24,
    title = {Estimating {QSVT} angles for matrix inversion with large condition numbers},
    journal = {Journal of Computational Physics},
    volume = {525},
    pages = {113767},
    year = {2025},
    issn = {0021-9991},
    doi = {https://doi.org/10.1016/j.jcp.2025.113767},
    author = {I. Novikau and I. Joseph}
}

@article{L22,
    author = {Lin Lin},
    title = {Lecture notes on quantum algorithms for scientific computation} ,
    journal = {arXiv:2201.08309},
    year = {2022}
}

@article{VQLS,
    title={Variational Quantum Linear Solver},
    volume={7},
    journal={Quantum},
    publisher={Verein zur Forderung des Open Access Publizierens in den Quantenwissenschaften},
    author={Bravo-Prieto, Carlos and LaRose, Ryan and Cerezo, M. and Subasi, Yigit and Cincio, Lukasz and Coles, Patrick J.},
    year={2023},
    pages={1188}
}

@article{IR_HHL_zheng2024,
    title={An Early Investigation of the {HHL} Quantum Linear Solver for Scientific Applications}, 
    author={Muqing Zheng and Chenxu Liu and Samuel Stein and Xiangyu Li and Johannes Mülmenstädt and Yousu Chen and Ang Li},
    journal={arXiv:2404.19067},
    year={2024}
}

@article{IR_HHL_saito2021,
    title={An Iterative Improvement Method for {HHL} algorithm for Solving Linear System of Equations}, 
    author={Yoshiyuki Saito and Xinwei Lee and Dongsheng Cai and Nobuyoshi Asai},
    journal={arXiv:2108.07744},
    year={2021}
}

@article{HHL09,
    title={Quantum Algorithm for Linear Systems of Equations},
    volume={103},
    ISSN={1079-7114},
    url={http://dx.doi.org/10.1103/PhysRevLett.103.150502},
    DOI={10.1103/physrevlett.103.150502},
    number={15},
    journal={Physical Review Letters},
    publisher={American Physical Society (APS)},
    author={Harrow, Aram W. and Hassidim, Avinatan and Lloyd, Seth},
    year={2009}
}

@article{IR_opti,
    author = {Mohammadisiahroudi, Mohammadhossein and Augustino, Brandon and Sampourmahani, Pouya and Terlaky, Tamás},
    year = {2025},
    month = {01},
    pages = {},
    title = {Quantum computing inspired iterative refinement for semidefinite optimization},
    journal = {Mathematical Programming},
    doi = {10.1007/s10107-024-02183-z}
}

@misc{myqlm,
    author = {{Eviden Quantum Lab}},
    title = {{myQLM: Quantum Computing Framework}},
    year = {2020-2024},
    note = {https://myqlm.github.io}
}

\end{document}